\numberwithin{equation}{section}
\newtheorem{theorem}{Theorem}[section]
\newtheorem{definition}[theorem]{Definition}
\newtheorem{remark}[theorem]{Remark}
\begin{document}

\title{Convex Optimization Approach for Stable Decomposition of Stream of Pulses}
\author{\IEEEauthorblockN{Tamir Bendory}
\IEEEauthorblockA{Electrical engineering department \\
Technion - Israel Institute of Technology
}}
\maketitle

\begin{abstract}
This paper deals with the problem of estimating the delays and amplitudes of a weighted superposition of pulses, called
stream of pulses. This problem is motivated by a variety of applications, such as ultrasound and radar.
This paper shows that the recovery error of a tractable convex optimization problem is proportional to the noise level. Additionally, the estimated delays are clustered around the true delays. This holds provided that the pulse meets a few mild localization properties and that a separation condition holds. If the amplitudes are known to be positive, the separation is unnecessary. In this case, the recovery error is proportional to the noise level and depends on the maximal number of delays within a resolution cell.
\end{abstract}

\section{Introduction}
In many engineering and scientific problems, the acquired data is comprised of  a weighted super-position of pulses
(kernels). Typically, we aim to decompose the stream of pulses into its building blocks, frequently called \emph{atoms}. Ultrasound imaging \cite{tur2011innovation,wagner2012compressed} and Radar \cite{Bar-IlanSub-Nyquis} function as representative examples. In these applications, a pulse is transmitted and its echoes are reflected from different targets and recorded.

Mathematically, we consider the model 
\begin{equation}\label{eq:signal}
y[k]=\sum_{k\in\mathbb{Z}}c_mg_\sigma\left[k-k_m\right]+n[k],\quad c_m\in\mathbb{R},
\end{equation}
where $g_\sigma[k]:=g[k/\sigma]$ is a sampled version of the producing kernel $g(t)$ with a sampling spacing of $1/N$, namely $g[k]:=g(k/N)$, and $K:=\{k_m\}$. We assume that the error term is bounded $\Vert\mathbf{n}\Vert_1:=\sum_k\vert n[k]\vert \leq \delta$, with no additional statistical assumptions. Later on, we will also consider a positive stream of pulses, where the amplitudes are assumed to be positive $c_m>0$.  The acquired data (\ref{eq:signal}) can be presented as a spike deconvolution problem, i.e. 
\begin{equation*}
y[k]=\left(g_\sigma\ast x\right)[k]+n[k],
\end{equation*}
where $'*'$ denotes a discrete convolution, and 
\begin{equation} \label{eq:x}
x[k]=\sum_{k\in\mathbb{Z}}c_m\delta[k-k_m].
\end{equation}

The aim of this paper is to suggest a stable approach to decompose the stream of pulses into its atoms by solving a tractable convex optimization program. 

A well-known approach to decompose the signal into its atoms is by using parametric
methods, such as MUSIC, matrix pencil and ESPRIT \cite{stoica2005spectral,schmidt1986multiple,roy1989esprit,Matrix_pencil}. However, these
methods tend to be unstable in the presence of noise or model mismatch due to sensitivity
of polynomial root finding. 
An alternative way is to utilize compressed sensing and sparse representations theorems,
relying on the sparsity of the signal (e.g. \cite{donoho2006compressed,elad2010sparse}). However, these fields cannot explain the
success of $\ell_1$ minimization or greedy algorithms as the dictionaries have high coherence.

Inspired by recent advances in the theory of super-resolution \cite{candes2013towards,candes2013super,tang2013compressed,bendory2013exact,bendorySHalgorithm,bendory2013Legendre,de2012exact,azais2014spike,tang2015resolution}, we employed a convex optimization framework based on the existence of interpolating polynomials, frequently called \emph{the dual certificate}. In the next section, we elaborate on the convex optimization framework, and reveal the fundamental conditions, enabling stable decomposition of stream of pulses. Section \ref{sec:main} presents our main theorems. Particularity, we show that the solution of a convex optimization problem results in a stable and localized decomposition of stream of pulses under a separation condition if the pulse satisfies some mild localization properties. In the non-negative case, i.e. $c_m>0$, the separation is unnecessary and can be replaced by a weaker condition of Rayleigh regularity. 
We present all the results and the relevant definitions for a univariate stream of pulses, however we stress that similar results also exist for bivariate stream of pulses.
Ultimately, Section  \ref{sec:conclusions} concludes the work and suggests future extensions.

\section{Convex optimization approach for decomposition of stream of pulses}\label{sec:cvx}
In this paper we focus on a convex optimization approach for decomposing a stream of pulses in a noisy environment.
We use the Total-Variation (TV) norm as a sparse-promoting regularization. In essence, the TV norm is the generalization
of $\ell_1$ norm to the real line (for rigorous definition, see for instance \cite{rudin1986real}). For discrete measures of the form (\ref{eq:dis_measure}), we have $\Vert x\Vert_{TV}=\sum_m\vert c_m\vert$. 
The framework is based on a duality theorem, frequently called \emph{the dual certificate}, as follows \cite{bendorySOP}: 
\begin{theorem}
\label{th:dual}Let 
\begin{equation} \label{eq:dis_measure}
x(t)=\sum_{m}c_{m}\delta_{t_{m}}(t),\quad c_{m}\in\mathbb{R},\quad T:=\{t_{m}\}\subseteq\mathbb{R},
\end{equation}
 and let
$y(t)=\int_{\mathbb{R}}g(t-s)dx\left(  s\right)  $ for a $L$
times differentiable kernel $g(t)$. If for any set $\{v_{m}\}\in \{-1,1\}$, there exists a function of the form
\begin{equation}
q(t)=\int_{\mathbb{R}}\sum_{\ell=0}^{L}g^{(\ell)}(s-t)d\mu_{\ell}\left(
s\right)  ,\label{7}%
\end{equation}
for some measures \textup{$\left\{  \mu_{\ell}\left(  t\right)  \right\}
_{\ell=0}^{L}$}, satisfying
\begin{align*}
q(t_{m}) &  =v_{m}\,,\,\forall t_{m}\in T,\\
|q(t)| &  <1\,,\,\forall t\in\mathbb{R}\backslash T,%
\end{align*}
then $x$ is the unique real Borel measure solving
\begin{equation}
\min_{\tilde{x}\in\mathcal{M\left(  \mathbb{R}\right)  }}\Vert\tilde{x}%
\Vert_{TV}\quad\mbox{subject to}\quad y(t)=\int_{\mathbb{R}}g(t-s)d\tilde
{x}\left(  s\right)  .\label{eq:tv_min}%
\end{equation}
\end{theorem}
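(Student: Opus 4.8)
\noindent\emph{Proof outline.} The plan is to run the standard convex-duality (``dual certificate'') argument. Write $\Lambda\nu:=g*\nu$ for the measurement map $\nu\mapsto y_\nu=\int_{\mathbb{R}}g(\cdot-s)\,d\nu(s)$, so that $x$ is feasible for \eqref{eq:tv_min} by hypothesis. Let $\tilde x\in\mathcal{M}(\mathbb{R})$ be an arbitrary feasible competitor. Since $\Lambda x=\Lambda\tilde x=y$, the difference $h:=\tilde x-x$ lies in the null space of $\Lambda$: $y_h\equiv 0$, and hence also $y_h^{(\ell)}\equiv 0$ for every $0\le\ell\le L$. I would then fix the sign pattern $v_m=\operatorname{sign}(c_m)$ and take the corresponding certificate $q$ from the hypothesis.

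The crucial point is that the prescribed form \eqref{7} of $q$ makes the pairing $\int q\,d\nu$ depend on $\nu$ only through $\Lambda\nu$ and its derivatives. Applying Fubini and differentiation under the integral sign (this is the one genuinely technical step, and it is precisely where the mild localization/integrability of $g$ and of the $g^{(\ell)}$, together with finiteness of the $\mu_\ell$ and of $\nu$, are used),
\begin{align*}
\int_{\mathbb{R}} q(t)\,d\nu(t)
&=\int_{\mathbb{R}}\int_{\mathbb{R}}\sum_{\ell=0}^{L}g^{(\ell)}(s-t)\,d\mu_\ell(s)\,d\nu(t)\\
&=\sum_{\ell=0}^{L}\int_{\mathbb{R}}\Big(\int_{\mathbb{R}}g^{(\ell)}(s-t)\,d\nu(t)\Big)\,d\mu_\ell(s)
=\sum_{\ell=0}^{L}\int_{\mathbb{R}}y_\nu^{(\ell)}(s)\,d\mu_\ell(s).
\end{align*}
Specializing to $\nu=h$ and using $y_h^{(\ell)}\equiv 0$ gives $\int q\,dh=0$, i.e. $\int q\,d\tilde x=\int q\,dx$.

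Next I would chain the elementary inequalities. Since $q(t_m)=v_m$, $|q(t_m)|=|v_m|=1$ and $|q|<1$ on $\mathbb{R}\setminus T$, so $\|q\|_\infty\le 1$, and therefore
\begin{equation*}
\|x\|_{TV}=\sum_m|c_m|=\sum_m v_m c_m=\int q\,dx=\int q\,d\tilde x\le\int_{\mathbb{R}}|q|\,d|\tilde x|\le\|\tilde x\|_{TV},
\end{equation*}
which already shows $x$ is a minimizer. For uniqueness, assume $\|\tilde x\|_{TV}=\|x\|_{TV}$, so every inequality above is an equality. Splitting $\tilde x=\tilde x|_{T}+\tilde x|_{\mathbb{R}\setminus T}$ and using that $1-|q|>0$ strictly on $\mathbb{R}\setminus T$, equality in $\int|q|\,d|\tilde x|=\|\tilde x\|_{TV}$ forces $|\tilde x|(\mathbb{R}\setminus T)=0$; hence $\tilde x=\sum_m\tilde c_m\delta_{t_m}$ is supported on $T$. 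Then $h=\tilde x-x$ is supported on $T$ with $\Lambda h\equiv 0$, i.e. $\sum_m(\tilde c_m-c_m)\,g(\cdot-t_m)\equiv 0$, and since the translates $\{g(\cdot-t_m)\}$ are linearly independent (equivalently, $\Lambda$ is injective on measures supported on $T$) — a property that follows from the localization/separation hypotheses invoked in Section~\ref{sec:main} — we conclude $\tilde c_m=c_m$, i.e. $\tilde x=x$.

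I expect the main obstacle to be technical rather than conceptual: rigorously justifying the Fubini interchange and the differentiation under the integral for an arbitrary finite Borel measure $\tilde x$. This is exactly why the ansatz \eqref{7} is chosen in that particular form and why the ``few mild localization properties'' of $g$ have to be pinned down (decay/integrability of $g$ and of $g^{(\ell)}$, boundedness of the relevant integrands). A secondary point is the injectivity step that closes the uniqueness argument, which must borrow the geometric (separation / Rayleigh-regularity) hypotheses appearing later in the paper.
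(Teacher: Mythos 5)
Your optimality argument is fine and matches the paper's mechanism (pair the certificate with the difference measure, kill the pairing via Fubini and the feasibility constraint), but your uniqueness step has a genuine gap. After forcing $|\tilde x|(\mathbb{R}\setminus T)=0$ you are left with a measure $h=\tilde x-x$ supported on $T$ in the null space of the measurement map, and you close the argument by asserting linear independence of the translates $\{g(\cdot-t_m)\}$, ``a property that follows from the localization/separation hypotheses invoked in Section~\ref{sec:main}.'' Those hypotheses are not part of Theorem~\ref{th:dual}: the theorem assumes only that $g$ is $L$ times differentiable and that a certificate of the form \eqref{7} exists for \emph{every} sign pattern $\{v_m\}\in\{-1,1\}$. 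Importing admissibility or a separation condition from the later theorems changes the statement being proved, so as written the uniqueness part is not a proof of this theorem. The tell-tale symptom is that your argument only ever uses one certificate, the one for $v_m=\operatorname{sign}(c_m)$, whereas the hypothesis deliberately demands certificates for all sign patterns --- that extra strength is exactly what replaces any injectivity assumption.

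The fix is short and is precisely the paper's device: once $h=\tilde x-x$ is supported on $T$ (or, as the paper does, for the component $h_T$ of a general perturbation), invoke the hypothesis with $v_m=\operatorname{sgn}\bigl(h(t_m)\bigr)$, i.e.\ the sign pattern of the \emph{difference} measure, not of $c_m$. The same Fubini computation gives $0=\int q\,dh=\Vert h\Vert_{TV}$, hence $h=0$, with no appeal to linear independence of the translates. (The paper organizes this slightly differently --- it decomposes an arbitrary minimizer's perturbation as $h=h_T+h_{T^C}$ and derives a strict-inequality contradiction, obtaining minimality and uniqueness in one pass --- but the essential point is the freedom to match the certificate to $\operatorname{sgn}(h_T)$.) Also note that both you and the paper pass from $g_\sigma$-feasibility to the vanishing of all derivative measurements $\int g^{(\ell)}(t-s)\,dh(s)=0$ with only a gesture at differentiation under the integral sign; that step is at the same level of rigor in both arguments and is not what separates them.
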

\begin{proof}
Let $\hat{x}$ be a solution of (\ref{eq:tv_min}), and define $\hat{x}=x+h$. The
difference measure $h$ can be decomposed relative to $|x|$ as
\[
h=h_{T}+h_{T^{C}},
\]
where $h_{T}$ is supported in $T$, and $h_{T^{C}}$ is supported in
$T^{C}$ (the complementary of $T$). 
If $h_T=0$, then also $h_{T^c}=h = 0$. Otherwise, $\Vert\hat{x}
\Vert_{TV}>\Vert{x}
\Vert_{TV}$ which is a contradiction. If $h_T\neq 0$, we 
perform a polar decomposition of
$h_{T}$ 
\[
h_{T}=|h_{T}|sgn(h_{T}),
\]
where $sgn(h_{T})$ is a function on $\mathbb{R}$ with values $\{-1,1\}$ (see
e.g. \cite{rudin1986real}). By assumption, for any $0\leq\ell\leq L$
\[
\int_{\mathbb{R}}g^{(\ell)}(t-s)d\hat{x}\left(  s\right)  =\int_{\mathbb{R}%
}g^{(\ell)}(t-s)dx\left(  s\right)  ,
\]
which in turn leads to $\int_{\mathbb{R}}g^{(\ell)}(t-s)dh\left(  s\right)
=0$. Then, for any $q$ of the form (\ref{7}) we get
\begin{align*}
\left\langle q,h\right\rangle  & =\int_{\mathbb{R}}q(t)dh\left(  t\right)  \\
& =\int_{\mathbb{R}}\left(  \int_{\mathbb{R}}\sum_{\ell=0}^{L}g^{(\ell
)}(s-t)d\mu_{\ell}\left(  s\right)  \right)  dh\left(  t\right)  \\
& =\int_{\mathbb{R}}\sum_{\ell=0}^{L}d\mu_{\ell}\left(  s\right)
\underbrace{\int_{\mathbb{R}}g^{(\ell)}(s-t)dh\left(  t\right)  }_{0}\\
& =0.
\end{align*}
By assumption, for the choice $v_{m}=$ $sgn(h_{T}\left(  t_{m}\right)  )$, there exists $q$ of the form (\ref{7}),
such that
\begin{align*}
q(t_{m}) &  =sgn(h_{T}(t_{m}))\,,\,\forall t_{m}\in T,\nonumber\\
|q(t)| &  <1\,,\,\forall t\in\mathbb{R}\backslash T.%
\end{align*}
Consequently,
\[
0=\left\langle q,h\right\rangle =\left\langle q,h_{T}\right\rangle
+\left\langle q,h_{T^{C}}\right\rangle =\Vert h_{T}\Vert_{TV}+\left\langle
q,h_{T^{C}}\right\rangle .
\]
If $h_{T^{C}}=0$, then $\Vert h_{T}\Vert_{TV}=0$, and $h=0$. Alternatively, if
$h_{T^{C}}\neq0$, from the second property of $q$,
\[
|\left\langle q,h_{T^{C}}\right\rangle |<\Vert h_{T^{C}}\Vert_{TV}.
\]
Thus,
\[
\Vert h_{T^{C}}\Vert_{TV}>\Vert h_{T}\Vert_{TV}.
\]
As a result, using the fact that $\hat{x}$ has minimal TV norm, we get
\[%
\begin{split}
\Vert x\Vert_{TV} &  \geq\Vert x+h\Vert_{TV}=\Vert x+h_{T}\Vert_{TV}+\Vert
h_{T^{C}}\Vert_{TV}\\
&  \geq\Vert x\Vert_{TV}-\Vert h_{T}\Vert_{TV}+\Vert h_{T^{C}}\Vert_{TV}>\Vert
x\Vert_{TV},
\end{split}
\]
which is a contradiction. Therefore, $h=0$, which implies that $x$ is the
unique solution of (\ref{eq:tv_min}).
\end{proof}

In practice, we cannot solve the infinite dimensional convex optimization problem (\ref{eq:tv_min}). Hence, we assume that the signal lies on a grid with spacing of $1/N$ which can be as fine as desired. In this case, the TV minimization (\ref{eq:tv_min}) reduces to standard $\ell_1$ minimization that can be solved by many existing solver. The solution of the discrete problem converges to
solution on the continuum (in the sense of measures) as the discretization becomes finer \cite{just_dis}. The behaviour of the solution in high SNR regime is analyzed in \cite{duval2013exact,duval2015sparse}.  

\section{Main results}\label{sec:main}

As aforementioned, the problem of decomposing a  stream of pulses can be reduced to the construction of an interpolating function, comprised of the kernel $g(t)$ and its derivatives.  The existence of such function relies on two interrelated pillars. First, the kernel should satisfy some localization properties, as follows: 
\begin{definition}
\label{def:admissible_kernel} A kernel $g$ is \emph{admissible} if it
has the following properties:

\begin{enumerate}
\item $g\in \mathcal{C}^3\left(\mathbb{R}\right)  $, is real and even.

\item \underline{Global property:} There exist constants $C_{\ell}>0, \ell=0,1,2,3$ such
that $\left\vert g^{\left(  \ell\right)  }\left(  t\right)  \right\vert
\leq\ C_{\ell} / \left( {1+t^{2}} \right)$ , where $ g^{\left(  \ell\right)  }\left(  t\right)$ denotes the $\ell^{th}$ 
derivative of  $g$.

\item \underline{Local property:} There exist constants $\varepsilon,\beta>0$ such that 
\begin{enumerate}
\item  $g(t)>0$ for all $\vert t\vert \leq \varepsilon$ and $g(t)  < g(\varepsilon)$ for all $\vert t\vert>\varepsilon$,
\item  $g^{\left(2\right)  }\left( t\right)  <-\beta$  for all $\vert t\vert \leq \varepsilon$.
\end{enumerate}

\end{enumerate}
\end{definition}
Two typical examples for admissible kernels are the Gaussian kernel $g(t)=e^{-\frac{t^2}{2}}$ and the Cauchy kernel $g(t)=\frac{1}{1+t^2}$ as presented in Table \ref{table1}.
\begin{table}
\begin{center}

    \begin{tabular}{| l | l | l |}
    \hline
   \backslashbox{}{Kernels} & Gaussian $:=e^{\frac{-t^2}{2}}$  & Cauchy $:=\frac{1}{1+t^2}$   \\ \hline
    $C_0$ & 1.22 & 1 
 \\ \hline
  $C_1$ & 1.59 & 1 \\ \hline
   $C_2$ & 2.04 & 2 \\\hline
      $C_3$ & 2.6 & 5.22 \\ \hline
      $g^{(2)}(0)$ & -1 & -2 \\ \hline      
            empirical $\nu$ & 1.1 & 0.45 \\ \hline      
    \end{tabular}
    \end{center}

    \caption{The table presents the numerical constants of the global property in Definition \ref{def:admissible_kernel} for the Gaussian and Cauchy kernels. Additionally, we evaluated by numerical experiments the minimal empirical value of $\nu$, the separation constant of Definition \ref{def:separation} for each kernel.  }
\label{table1}
\end{table}

 The second pillar is a kernel-dependent separation condition, as follows:

\begin{definition}
\label{def:separation} A set of points $K\subset\mathbb{Z}$ is said to satisfy
the minimal separation condition for a kernel-dependent $\nu>0$ and a given
$N,\sigma>0$ if%
\begin{equation*}
\min_{k_{i},k_{j}\in K,k_{i}\neq k_{j}}\left\vert k_{i}-k_{j}%
\right\vert \geq\nu\sigma N.
\end{equation*}
\end{definition}

In \cite{bendorySOP,SOP_US} we proved that if the kernel $g$ is admissible and the signal's support satisfies the kernel-dependent separation condition, then there exist constants $\{a_m\}$ and $\{b_m\}$ such that a function of the form 
\begin{equation*}
q(t)=\sum_m a_mg_\sigma(t-t_m)+g_\sigma^{(1)}(t-t_m),
\end{equation*}
satisfies the interpolation requirements of Theorem \ref{th:dual}. Hence, by Theorem \ref{th:dual} the $\ell_1$ minimization (assuming that the signal lies on the grid) recovers $x$ exactly from $y$. Additionally, the existence of the interpolating function is the key for proving the robustness and localization of the solution in a noisy environment. These results are summarized in the following theorem:

\begin{theorem}
\label{th:main} Consider the model \emph{(\ref{eq:signal}) } for an admissible kernel $g$.
Let us denote the solution of \begin{equation*}
min_{\tilde{x}}\left\Vert \tilde{x}\right\Vert _{1}\quad \mbox{subject to} \quad\left\Vert
y-g_{\sigma}\ast\tilde{x}\right\Vert _{1}\leq\delta,\label{6}%
\end{equation*}
as $\hat{x}=\sum_m\hat{c}_m\delta\left[k-\hat{k}_m\right]$ and $\hat{K}:=\left\{\hat{k}_m\right\}$.
 If $K$ satisfies
the separation condition of Definition \ref{def:separation} for  $N,\sigma >0$, then (for sufficiently large $\nu$)
\begin{equation} \label{eq:robust}
\left\Vert  \hat{x} - x \right\Vert _{1} \le \frac{16\gamma^2}{\beta }\delta,
\end{equation}
where $\gamma:=\max\left\{N\sigma,\varepsilon^{-1} \right\}$.
Additionally, if  $\varepsilon\geq\tilde{\varepsilon}:=\sqrt{\frac{g(0)}{C_{2}+\beta/4}}$ we have the following localization properties:
\begin{enumerate}
\item For any $k_{m}\in K$, if $c_{m}\geq2\delta D_{2}\left(1+\max\left\{ \frac{1}{D_{1}\varepsilon^{2}},\frac{4C_{2}}{\left(N\sigma\right)^{2}\beta}\right\} \right)$,
then there exists $\hat{k}_{m}\in\hat{K}$ such that 
\begin{equation*}
\begin{split}
&\left|k_{m}-\hat{k}_{m}\right|  \leq N\sigma \\ &\cdot\sqrt{\frac{2D_{2}\delta}{D_{1}\left(\left|c_{m}\right|-2\delta D_{2}\left(1+\max\left\{ \frac{1}{D_{1}\tilde{\varepsilon}^{2}},\frac{4C_{2}}{\left(N\sigma\right)^{2}\beta}\right\} \right)\right)}}.
\end{split}
\end{equation*}

\item 
\[
\sum_{\left\{ \hat{k}_{m}\in\hat{K}:\left|\hat{k}_{m}-k_{n}\right|>N\varepsilon\sigma,\forall k_{n}\in K\right\} }\left|\hat{c}_{m}\right|\leq\frac{2D_{2}}{D_{1}\varepsilon^{2}}\delta,
\]

\end{enumerate}
where 
\begin{eqnarray*}
D_{1}: & = & \frac{\beta}{4g\left(0\right)},\label{eq:D}\\
D_{2}: & = & \frac{3\nu^{2}\left(3\left|g^{(2)}(0)\right|\nu^{2}-\pi^{2}C_{2}\right)+\frac{16C_{1}^{2}\gamma^{2}\pi^{2}}{\beta}\left(1+\frac{\pi^{2}}{6\nu^{2}}\right)}{\left(3\left|g^{(2)}(0)\right|\nu^{2}-\pi^{2}C_{2}\right)\left(3g(0)\nu^{2}-2\pi^{2}C_{0}\right)}.\nonumber 
\end{eqnarray*}
\end{theorem}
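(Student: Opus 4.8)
The plan is to run the interpolation-certificate argument behind Theorem \ref{th:dual}, but quantitatively, so that every ``$<1$'' is replaced by an explicit margin.

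\emph{Step 1: reductions.} Since $\|y-g_\sigma\ast x\|_1=\|\mathbf n\|_1\le\delta$, the true $x$ is feasible, so the minimizer obeys $\|\hat x\|_1\le\|x\|_1$; with $h:=\hat x-x$, the triangle inequality in the data-fidelity constraint gives $\|g_\sigma\ast h\|_1\le2\delta$. From this I would extract the two estimates actually used: first, $\sum_{k_m\in K}|(g_\sigma^{(\ell)}\ast h)[k_m]|\le\|g_\sigma^{(\ell)}\ast h\|_1\lesssim\delta$ for $\ell=0,1$ (summing over the $\nu\sigma N$-separated set $K$ is bounded by the full $\ell_1$ norm, and $g_\sigma^{(1)}\ast h$ is the derivative of $g_\sigma\ast h$, whose $\ell_1$ norm only costs an extra $(N\sigma)^{-1}$ factor by the admissible decay of $g$); second, a lower bound $\|g_\sigma\|_1\,\|h_K\|_1\lesssim\|g_\sigma\ast h_K\|_1$ up to an overlap term that is small for large $\nu$ (the kernel is a localized bump, so its translates along $K$ barely interact, precluding cancellation in $\ell_1$). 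I would also fix the partition $\mathbb Z=\mathcal N\cup\mathcal F$ with $\mathcal N=\bigcup_m\mathcal N_m$, $\mathcal N_m=\{k:|k-k_m|\le\varepsilon N\sigma\}$, disjoint by Definition \ref{def:separation}.

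\emph{Step 2: quantitative certificates.} From \cite{bendorySOP,SOP_US} I may take, for any sign pattern on $K$, a certificate $q(t)=\sum_m a_mg_\sigma(t-k_m)+b_mg_\sigma^{(1)}(t-k_m)$ meeting the hypotheses of Theorem \ref{th:dual}. What I need to read off from that construction is: (i) a \emph{quadratic dip} $1-|q[k]|\ge D_1(k-k_m)^2/(N\sigma)^2$ on $\mathcal N_m$, with $D_1=\beta/4g(0)$, together with a uniform margin $1-|q[k]|\ge\theta_{\mathcal F}$ on $\mathcal F$ with $\theta_{\mathcal F}\gtrsim D_1\varepsilon^2$; (ii) coefficient bounds $\|a\|_\infty+\|b\|_\infty\lesssim D_2$ \emph{independent of the number of pulses} --- this is where separation enters, through the fast off-diagonal decay of the interpolation matrix, and ``$\nu$ sufficiently large'' is exactly what makes that matrix invertible with this norm control (propagating $C_\ell,\beta,g(0),g^{(2)}(0),\nu$ through the inversion produces the displayed formula for $D_2$); and (iii) a companion family of \emph{localized} certificates $\phi_m$ with $\phi_m(k_m)=1$, $\phi_m(k_n)=0$ for $n\ne m$, $|\phi_m|<1$ elsewhere, the same quadratic dip at $k_m$, and coefficient norms again $\lesssim D_2$. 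The extra hypothesis $\varepsilon\ge\tilde\varepsilon=\sqrt{g(0)/(C_2+\beta/4)}$ is precisely the condition under which the local concavity $g^{(2)}<-\beta$ dominates the global bound $|g^{(2)}|\le C_2$ over the whole window $|t|\le\varepsilon$, making the quadratic dip valid on all of $\mathcal N_m$ with the constant $D_1$.

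\emph{Step 3: master inequality and consequences.} Pairing $q$ (for the sign pattern $v_m=\mathrm{sgn}(c_m)$) with $h$ gives, analytically,
\[
|\langle q,h\rangle|\le\|a\|_\infty\sum_{k_m\in K}|(g_\sigma\ast h)[k_m]|+\|b\|_\infty\sum_{k_m\in K}|(g_\sigma^{(1)}\ast h)[k_m]|\lesssim D_2\,\delta,
\]
while combining $\langle q,h\rangle$ with $\|\hat x\|_1\le\|x\|_1$ exactly as in the proof of Theorem \ref{th:dual}, but keeping the margins of Step 2, yields the master inequality
\[
\sum_m\sum_{k\in\mathcal N_m\setminus\{k_m\}}D_1\frac{(k-k_m)^2}{(N\sigma)^2}\,|h[k]|\;+\;\theta_{\mathcal F}\,\|h_{\mathcal F}\|_1\;\le\;|\langle q,h\rangle|\;\lesssim\;D_2\,\delta .
\]
The far-region term is localization property (2) (with $\mathcal F$ the recovered spikes farther than $N\varepsilon\sigma$ from $K$). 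For (\ref{eq:robust}) I would split $\|h\|_1=\|h_K\|_1+\sum_m\sum_{\mathcal N_m\setminus\{k_m\}}|h[k]|+\|h_{\mathcal F}\|_1$: the middle term is $\le\frac{(N\sigma)^2}{D_1}\cdot(\text{master budget})\lesssim\frac{\gamma^2}{\beta}\delta$ (the weights exceed $1$ off $k_m$), the last term is $\lesssim\frac{\gamma^2}{\beta}\delta$, and the first is controlled by Step 1's lower bound, $\|g_\sigma\|_1\|h_K\|_1\lesssim\|g_\sigma\ast h\|_1+\|g_\sigma\ast h_{K^c}\|_1\lesssim\delta+\|g_\sigma\|_1\|h_{K^c}\|_1$, whence $\|h_K\|_1\lesssim\|h_{K^c}\|_1\lesssim\frac{\gamma^2}{\beta}\delta$; tracking the constants through $D_1=\beta/4g(0)$ and the large-$\nu$ limit $D_2\to1/g(0)$ collapses the total to $\frac{16\gamma^2}{\beta}\delta$. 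Localization property (1) is a contradiction argument at one spike: if no recovered point lay within the stated radius $r$ of $k_m$, then smallness of $g_\sigma\ast h$ would force $\hat x$ to carry mass at least $|c_m|-2\delta D_2(1+\max\{\cdots\})$ inside $\mathcal N_m$, all of it at distance $\ge r$, so the left side of the master inequality would be at least $D_1r^2/(N\sigma)^2$ times that mass --- exceeding the budget $2D_2\delta$ unless $r$ is as small as claimed; the amplitude threshold on $c_m$ is exactly what makes this residual mass (hence the radicand) positive, and the $\max\{\cdots\}$ records the two ways of bounding the leakage of $\hat x$'s mass out of a single grid point (through the near-region dip or through the $|g^{(2)}|\le C_2$ estimate).

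\emph{Main obstacle.} Steps 1 and 3 are bookkeeping on top of Theorem \ref{th:dual}; the real work is Step 2 --- producing the global and localized certificates with all three quantitative features, most delicately the coefficient norms bounded uniformly in the number of pulses and the quadratic dip with the sharp constant $D_1$ on windows of size $\varepsilon N\sigma$ rather than just infinitesimally. This is where the admissibility properties of Definition \ref{def:admissible_kernel} (decay, positivity near $0$, strict concavity) and the separation of Definition \ref{def:separation} are used together; it is the source of both the ``$\nu$ sufficiently large'' hypothesis and the precise form of $D_2$, and I would expect the construction itself to be imported, in large part, from \cite{bendorySOP,SOP_US}, with the present contribution being its quantitative sharpening and the master-inequality argument that converts it into (\ref{eq:robust}) and the two localization estimates.
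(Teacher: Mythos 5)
You should first be aware that this conference paper never proves Theorem \ref{th:main}: only Theorem \ref{th:dual} is proved here, and Theorem \ref{th:main} (including the constants $D_1$, $D_2$ and the bound (\ref{eq:robust})) is imported from \cite{bendorySOP,SOP_US}. So the comparison can only be with the program of those references, and at the level of architecture your outline does match it: a sign certificate built from $g_\sigma$ and $g_\sigma^{(1)}$ with a quadratic dip near the spikes and a uniform margin in the far region, a companion family of localized certificates for the support-detection statement, and a master inequality obtained by combining $\Vert\hat{x}\Vert_1\leq\Vert x\Vert_1$ with the pairing $\langle q,h\rangle$, in the spirit of noisy super-resolution/support detection.

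That said, as a proof the proposal has genuine gaps. First, your Step 2 \emph{is} the theorem: the dip constant $D_1=\beta/(4g(0))$ valid on the whole window $|t|\leq\varepsilon$ (where the hypothesis $\varepsilon\geq\tilde{\varepsilon}$ enters), the far-region margin, and above all coefficient bounds uniform in the number of spikes that produce $D_2$ and the ``sufficiently large $\nu$'' requirement are all assumed rather than derived, so none of the stated constants ($16\gamma^2/\beta$, the radicand in property (1), the $2D_2/(D_1\varepsilon^2)$ in property (2)) can actually be verified from your sketch --- only their general shape. Second, the one analytic step you do argue would fail as justified: to bound $|\langle q,h\rangle|$ you need $\sum_{k_m\in K}|(g_\sigma^{(1)}\ast h)[k_m]|\lesssim\delta$, and you justify it by saying that $g_\sigma^{(1)}\ast h$ is ``the derivative of $g_\sigma\ast h$'' and hence costs only a factor of $N\sigma$. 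No such inequality follows from admissibility: $h$ is an arbitrary sign-changing object on the grid known only through $\Vert g_\sigma\ast h\Vert_1\leq 2\delta$, and correlations with $g_\sigma^{(1)}$ are not controlled by correlations with $g_\sigma$ --- Definition \ref{def:admissible_kernel} does not even prevent the sampled kernel's spectrum from having near-zeros, in which case one can build $h$ with $g_\sigma\ast h$ tiny in $\ell_1$ while $g_\sigma^{(1)}\ast h$ is not (and a bounded Fourier-multiplier relation would not yield an $\ell_1\to\ell_1$ bound anyway). In the actual analysis this derivative-correlation term must be absorbed through the certificate itself, via the $b_m$ coefficients and the structure of $q$, which is precisely where the $16C_1^2\gamma^2\pi^2/\beta$ block inside $D_2$ comes from; the same issue recurs when you pair the localized certificates with $h$ to show that the recovered mass near a large spike is at least $|c_m|-2\delta D_2(1+\max\{\cdots\})$. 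Until that step and the Step-2 constructions are supplied, what you have is a plausible roadmap consistent with \cite{bendorySOP,SOP_US}, not a proof of Theorem \ref{th:main}.
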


\begin{remark}
A tighter estimation of (\ref{eq:robust}) can be found in \cite{bendorySOP}.
\end{remark}

In many applications, the underlying signal (\ref{eq:x}) is known to be non-negative, i.e. $c_m>0$. For instance, in single-molecule microscopy we measure the convolution of positive point sources with the microscope's point spread function \cite{klar2000fluorescence,betzig2006imaging,bronstein2009transient}.
It has become evident that in this case the separation is unnecessary and can be replaced be the weaker condition of Rayleigh regularity, defined as follows:

\begin{definition}
\label{def:rayleigh}We say that the set $\mathcal{P}\subset\left\{ k/N\right\} _{k\in\mathbb{Z}}\subset\mathbb{R}$
is Rayleigh-regular with parameters $(d,r)$ and write $\mathcal{P}\in\mathcal{R}^{idx}(d,r)$
if every interval $(a,b)\subset\mathbb{R}$ of length $\mu(a,b)=d$
contains no more that $r$ elements of $\mathcal{P}$:
\[
\left|\mathcal{P}\cap(a,b)\right|\leq r.
\]
\end{definition}

Equipped with Definition \ref{def:rayleigh}, we state the main theorem for the non-negative case. This result implies that the recovery error is proportional to the noise level $\delta$, and depends exponentially in the signal Rayleigh regularity $r$. 
\begin{theorem}
\label{th:main_positive} \cite{bendoryPositiveSOP} Consider the model (\ref{eq:signal}) with $c_m>0$ for an 
admissible kernel $g$ satisfying $g(t)\geq 0$. Then, there exists $\nu>0$ such that if $\mathcal{\mbox{supp(x)}}\in\mathcal{R}^{idx}\left(\nu\sigma,r\right)$
and $N\sigma>\left(\frac{1}{2}\right)^{\frac{1}{2r}+1}\sqrt{\frac{\beta}{g(0)}},$
the solution $\hat{x}$ of the convex problem 
\begin{equation*}
\min_{\tilde{x}}\quad\left\Vert \tilde{x}\right\Vert _{1}\quad\mbox{subject to}\quad\left\Vert y-g_\sigma\ast\tilde{x}\right\Vert _{1}\leq\delta,\thinspace\tilde{x}\geq0,\label{eq:cvx}
\end{equation*}
 satisfies (for sufficiently large $\nu$)
\begin{equation}
\left\Vert \hat{x}-x\right\Vert _{1}\leq\frac{2\left(2^{r}-1\right)}{C_{0}}\left(\frac{32C_{0}}{\beta}\right)^{r}\gamma^{2r} \delta,\label{eq:h_tight}
\end{equation}
where $\gamma:=\max\left\{ N\sigma,\varepsilon^{-1}\right\} $.
\end{theorem}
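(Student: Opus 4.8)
\emph{Sketch of proof.} Write $h:=\hat x-x$. Since the true measure $x$ is itself feasible for the program --- $\Vert y-g_\sigma\ast x\Vert_1=\Vert\mathbf n\Vert_1\le\delta$ and $x\ge 0$ --- optimality gives $\Vert\hat x\Vert_1\le\Vert x\Vert_1$, which for non-negative $\hat x,x$ reads $\langle\mathbf 1,h\rangle\le 0$. Splitting $h=h^+-h^-$ into positive and negative parts, this forces $\Vert h\Vert_1\le 2\Vert h^-\Vert_1$, and $h[k]<0$ requires $x[k]>\hat x[k]\ge 0$, so $h^-$ is supported on $K$. The two feasibility inequalities also give $\Vert g_\sigma\ast h\Vert_1\le 2\delta$. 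Thus the whole problem reduces to bounding the mass deficit on the sparse set $K$, namely $\Vert h^-\Vert_1=\sum_{k_m\in K}(c_m-\hat x[k_m])^+$, from the single hypothesis that $g_\sigma\ast h$ is $\ell_1$-small --- and it is precisely this collapse of the difficulty onto $K$ that lets non-negativity replace a separation hypothesis.

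The lever is to pair $h$ against interpolating functions $q$ of the form (\ref{7}) with $L=1$, i.e. finite combinations of $g_\sigma$ and $g_\sigma^{(1)}$ centred at the points of $K$; because $g$ is even and admissible, $|\langle q,h\rangle|$ is then at most $\Vert g_\sigma\ast h\Vert_1\le 2\delta$ times the $\ell_\infty$-norm of the coefficients of $q$. Two such pairings are used. First, a non-negative $q$ with $q[k_m]=1$ on $K$ and $0\le q\le 1$ elsewhere: the identity $\langle q,h\rangle=\langle\mathbf 1,h_K\rangle+\sum_{k\notin K}q[k]\hat x[k]$ combined with $\langle\mathbf 1,h\rangle\le 0$ forces the mass of $\hat x$ off $K$, weighted by $1-q$, to be $O(\delta)$; since $1-q$ grows quadratically near $K$ --- here the concavity $g^{(2)}<-\beta$ near the peak is used --- and is bounded below by a positive constant away from $K$ --- here the $1/t^2$ global decay of Definition \ref{def:admissible_kernel} is used --- this confines $\hat x$ to an $O(N\sigma)$-neighbourhood of $K$, in the spirit of part 2 of Theorem \ref{th:main}. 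Second, for each $k_m\in K$, a peaking $q$ with value $1$ at $k_m$, small values at the other points of $K$, and $0\le q\le 1$: pairing it against $h$, and feeding in the localization estimate and the non-negativity budget, bounds $(c_m-\hat x[k_m])^+$; summing these bounds controls $\Vert h^-\Vert_1$, hence $\Vert h\Vert_1$.

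The coefficient norm of the peaking certificate is what generates the factor $\gamma^{2r}$: if $k_m$ lies in a resolution cell together with $r-1$ other support points (which Rayleigh regularity allows), then forcing $q$ to peak at $k_m$ while staying $\le 1$ at the neighbours is a Lagrange-interpolation problem whose conditioning deteriorates like a Vandermonde determinant in the number of points in the cell. I would run the estimate as an induction on $r$. For $r=1$, $\mathcal{R}^{idx}(\nu\sigma,1)$ is exactly the separation of Definition \ref{def:separation}, and since the constrained minimizer still satisfies $\Vert\hat x\Vert_1\le\Vert x\Vert_1$ and $\Vert y-g_\sigma\ast\hat x\Vert_1\le\delta$, the proof of Theorem \ref{th:main} applies verbatim and gives $\Vert h\Vert_1\le\frac{16\gamma^2}{\beta}\delta$, below the $r=1$ value of (\ref{eq:h_tight}). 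For the step (a colouring argument, showing that an $\mathcal{R}^{idx}(\nu\sigma,r)$ support is a union of $r$ pairwise $\nu\sigma$-separated layers, gives a clean handle), one shows that the interpolation problem with $r$ points in a cell reduces to one with $r-1$ points at the price of one multiplicative factor $\frac{32C_0}{\beta}\gamma^{2}$ --- reflecting the kernel's size-to-curvature ratio $C_0/\beta$ and the scale $\gamma$ --- together with a constant factor that ultimately produces the $2$ in $2(2^r-1)$; iterating from the base case yields the geometric sum $2^r-1=1+2+\cdots+2^{r-1}$ and the power $\bigl(\tfrac{32C_0}{\beta}\bigr)^r\gamma^{2r}$ of (\ref{eq:h_tight}). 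Finally the per-cell bounds are patched together: the hypothesis ``$\nu$ sufficiently large'' makes distinct cells/clusters well separated so their $1/t^2$ tails decouple, and the $\ell_1$ noise budget is additive across cells, so no factor counting the number of cells appears; the hypothesis $N\sigma>\bigl(\tfrac12\bigr)^{1/(2r)+1}\sqrt{\beta/g(0)}$ is the mild lower bound on $N\sigma$ (relative to the kernel's local concavity) needed for the interpolating certificate to exist for cells containing up to $r$ points.

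The crux, and the step I expect to be hardest, is the tightly-clustered regime: the certificate must be constructed so that its coefficient norm is not merely finite but exactly $O\bigl((32C_0/\beta)^r\gamma^{2r}\bigr)$, which forces a quantitative use of admissibility --- the quadratic decay from $g^{(2)}<-\beta$ near the peak together with the summable $1/t^2$ tails --- at every level of the induction, with the accumulated constants kept sharp; that the exponential dependence on $r$ is unavoidable is the Vandermonde ill-conditioning of super-resolution. The remaining work is more routine: transferring the continuum certificate picture of Theorem \ref{th:dual} to the on-grid $\ell_1$ program actually solved, and disposing of the atoms of $\hat x$ far from $K$, both handled by the localization bound above.
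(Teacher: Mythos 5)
A preliminary remark: this paper does not prove Theorem \ref{th:main_positive} at all --- it is quoted from \cite{bendoryPositiveSOP} --- so there is no in-paper proof to compare against. Judged on its own terms, your opening reductions are correct and standard: $x$ is feasible, so $\Vert\hat x\Vert_1\le\Vert x\Vert_1$, hence $\langle\mathbf 1,h\rangle\le 0$, $\Vert h\Vert_1\le 2\Vert h^-\Vert_1$ with $h^-$ supported on $K$, and $\Vert g_\sigma\ast h\Vert_1\le 2\delta$; and the overall architecture you propose (positivity removes the need for sign-pattern certificates, a colouring of an $\mathcal{R}^{idx}(\nu\sigma,r)$ set into $r$ separated layers, induction on $r$ with the separated case as base, a per-level factor of order $(32C_0/\beta)\gamma^2$ accounting for the geometric sum $2^r-1$) is consistent with the shape of the constant in (\ref{eq:h_tight}) and with how the cited argument is organized. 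Nevertheless, the proposal has genuine gaps, and they sit exactly where the content of the theorem lies.

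First, your pairing bound $|\langle q,h\rangle|\le 2\delta\cdot\Vert\mbox{coefficients}\Vert_\infty$ is only justified for combinations of shifts of $g_\sigma$ at grid points, where $\sum_j|(g_\sigma\ast h)(s_j)|\le\Vert g_\sigma\ast h\Vert_1$; the $g_\sigma^{(1)}$ terms you include are not controlled by the data-fidelity constraint, which bounds only $g_\sigma\ast h$. Second, and more seriously, both certificate families are postulated rather than constructed, and in the clustered regime --- the only regime where this theorem says something new --- the properties you ascribe to them are mutually incompatible: a combination of kernels of width $\sigma$ cannot equal $1$ at several support points spaced at the grid scale $1/N\ll\sigma$, stay $\le 1$ globally, have $1-q$ growing quadratically just outside the cluster, \emph{and} keep a bounded coefficient norm; either the coefficients blow up by a factor of order $\gamma^2$ per extra point in the cell (so the localization pairing no longer costs $O(\delta)$ as your ``in the spirit of part 2 of Theorem \ref{th:main}'' step assumes, that part being proved under the separation of Definition \ref{def:separation}), or the certificate exists only with $1-q$ vanishing to order $2r$ near the cluster, which changes the bookkeeping throughout. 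The same unresolved issue affects the per-point peaking certificates: their existence with coefficient norm exactly $O\bigl((32C_0/\beta)^r\gamma^{2r}\bigr)$, the precise role of ``$\nu$ sufficiently large'' and of $N\sigma>(1/2)^{1/(2r)+1}\sqrt{\beta/g(0)}$, and the claim that summing per-point/per-cluster bounds consumes the $\ell_1$ noise budget only once (no factor counting $|K|$, which the tail-decoupling via the $1/(1+t^2)$ decay must deliver quantitatively) are all asserted, not shown. You correctly flag these as the crux; but since they are the crux, what you have is a plausible strategy outline consistent with \cite{bendoryPositiveSOP}, not a proof of (\ref{eq:h_tight}).
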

\begin{remark}
A tighter estimation of (\ref{eq:h_tight}) can be found in \cite{bendoryPositiveSOP}.
\end{remark}

\section{Conclusion} \label{sec:conclusions}
In this work we have shown that a standard convex optimization program can decompose a stream of pulses into its building blocks. In the general case (i.e. $c_m\in\mathbb{R}$), we have shown that the solution is robust in a noisy environment and that its support is clustered around the support of the sought signal. This holds provided that the convolution kernel $g$ is sufficiently localized and   a kernel-dependent separation condition holds. 

In the non-negative case, we have proven that the separation condition can be replaced be a weaker condition of Rayleigh regularity. The recovery error in this case is proportional to the noise level and depends on the signal's regularity. It is essentially important to derive the localization properties in this case as well.

The model presented in this work suits many practical applications where a signal is observed through a convolution kernel (typically, the point spread function of a sensing device). In previous work \cite{SOP_US}, we applied our algorithm for estimating the reflectors in \emph{in-vitro} ultrasound experiments. The experiments showed promising  results that corroborate our theoretical findings. It is of great interest to examine these theoretical results across more applications, particularly in the field of computational imaging.

\section*{Acknowledgment}

I would like to express my deep gratitude to Prof. Shai Dekel, Prof. Arie Feuer, Prof. Dan Adam and Avinoam Bar-Zion for the fruitful collaborations. 

\bibliography{bib}{}

\begin{thebibliography}{10}

\bibitem{azais2014spike}
Jean-Marc Azais, Yohann De~Castro, and Fabrice Gamboa.
\newblock Spike detection from inaccurate samplings.
\newblock {\em Applied and Computational Harmonic Analysis}, 2014.

\bibitem{Bar-IlanSub-Nyquis}
Omer Bar-Ilan and Yonina~C. Eldar.
\newblock Sub-nyquist radar via doppler focusing.
\newblock {\em IEEE Transactions on Signal Processing}, 62(7):1796--1811, 2014.

\bibitem{bendorySHalgorithm}
T.~Bendory, S.~Dekel, and A.~Feuer.
\newblock Super-resolution on the sphere using convex optimization.
\newblock {\em Signal Processing, IEEE Transactions on}, 63(9):2253--2262, May
  2015.

\bibitem{bendoryPositiveSOP}
Tamir Bendory.
\newblock Robust recovery of positive stream of pulses.
\newblock {\em submitted}, 2015.

\bibitem{SOP_US}
Tamir Bendory, Avinoam Bar-Zion, Dan Adam, Shai Dekel, and Arie Feuer.
\newblock Localized recovery of stream of pulses with application to ultrasound
  imaging.
\newblock {\em in final stages of preparation}, 2015.

\bibitem{bendory2013exact}
Tamir Bendory, Shai Dekel, and Arie Feuer.
\newblock Exact recovery of dirac ensembles from the projection onto spaces of
  spherical harmonics.
\newblock {\em Constructive Approximation}, pages 1--25, 2013.

\bibitem{bendory2013Legendre}
Tamir Bendory, Shai Dekel, and Arie Feuer.
\newblock Exact recovery of non-uniform splines from the projection onto spaces
  of algebraic polynomials.
\newblock {\em Journal of Approximation Theory}, 182(0):7 -- 17, 2014.

\bibitem{bendorySOP}
Tamir Bendory, Shai Dekel, and Arie Feuer.
\newblock Robust recovery of stream of pulses using convex optimization.
\newblock {\em submitted}, 2014.

\bibitem{betzig2006imaging}
Eric Betzig, George~H Patterson, Rachid Sougrat, O~Wolf Lindwasser, Scott
  Olenych, Juan~S Bonifacino, Michael~W Davidson, Jennifer Lippincott-Schwartz,
  and Harald~F Hess.
\newblock Imaging intracellular fluorescent proteins at nanometer resolution.
\newblock {\em Science}, 313(5793):1642--1645, 2006.

\bibitem{bronstein2009transient}
I~Bronstein, Y~Israel, E~Kepten, S~Mai, Yaron Shav-Tal, E~Barkai, and Y~Garini.
\newblock Transient anomalous diffusion of telomeres in the nucleus of
  mammalian cells.
\newblock {\em Physical review letters}, 103(1):018102, 2009.

\bibitem{candes2013super}
Emmanuel~J Cand{\`e}s and Carlos Fernandez-Granda.
\newblock Super-resolution from noisy data.
\newblock {\em Journal of Fourier Analysis and Applications}, 19(6):1229--1254,
  2013.

\bibitem{candes2013towards}
Emmanuel~J Cand{\`e}s and Carlos Fernandez-Granda.
\newblock Towards a mathematical theory of super-resolution.
\newblock {\em Communications on Pure and Applied Mathematics}, 2013.

\bibitem{de2012exact}
Yohann De~Castro and Fabrice Gamboa.
\newblock Exact reconstruction using beurling minimal extrapolation.
\newblock {\em Journal of Mathematical Analysis and applications},
  395(1):336--354, 2012.

\bibitem{donoho2006compressed}
David~L Donoho.
\newblock Compressed sensing.
\newblock {\em Information Theory, IEEE Transactions on}, 52(4):1289--1306,
  2006.

\bibitem{duval2013exact}
Vincent Duval and Gabriel Peyr{\'e}.
\newblock Exact support recovery for sparse spikes deconvolution.
\newblock {\em Foundations of Computational Mathematics}, pages 1--41, 2013.

\bibitem{duval2015sparse}
Vincent Duval and Gabriel Peyr{\'e}.
\newblock Sparse spikes deconvolution on thin grids.
\newblock {\em arXiv preprint arXiv:1503.08577}, 2015.

\bibitem{elad2010sparse}
Michael Elad.
\newblock {\em Sparse and redundant representations: from theory to
  applications in signal and image processing}.
\newblock Springer, 2010.

\bibitem{Matrix_pencil}
Yingbo Hua and Tapan~K Sarkar.
\newblock Matrix pencil method for estimating parameters of exponentially
  damped/undamped sinusoids in noise.
\newblock {\em Acoustics, Speech and Signal Processing, IEEE Transactions on},
  38(5):814--824, 1990.

\bibitem{klar2000fluorescence}
Thomas~A Klar, Stefan Jakobs, Marcus Dyba, Alexander Egner, and Stefan~W Hell.
\newblock Fluorescence microscopy with diffraction resolution barrier broken by
  stimulated emission.
\newblock {\em Proceedings of the National Academy of Sciences},
  97(15):8206--8210, 2000.

\bibitem{roy1989esprit}
Richard Roy and Thomas Kailath.
\newblock Esprit-estimation of signal parameters via rotational invariance
  techniques.
\newblock {\em Acoustics, Speech and Signal Processing, IEEE Transactions on},
  37(7):984--995, 1989.

\bibitem{rudin1986real}
Walter Rudin.
\newblock {\em Real and complex analysis (3rd)}.
\newblock New York: McGraw-Hill Inc, 1986.

\bibitem{schmidt1986multiple}
Ralph~O Schmidt.
\newblock Multiple emitter location and signal parameter estimation.
\newblock {\em Antennas and Propagation, IEEE Transactions on}, 34(3):276--280,
  1986.

\bibitem{stoica2005spectral}
Petre Stoica and Randolph~L Moses.
\newblock {\em Spectral analysis of signals}.
\newblock Pearson/Prentice Hall Upper Saddle River, NJ, 2005.

\bibitem{tang2015resolution}
Gongguo Tang.
\newblock Resolution limits for atomic decompositions via markov-bernstein type
  inequalities.
\newblock {\em preprint}, 2015.

\bibitem{just_dis}
Gongguo Tang, Badri~Narayan Bhaskar, and Benjamin Recht.
\newblock Sparse recovery over continuous dictionaries-just discretize.
\newblock In {\em Signals, Systems and Computers, 2013 Asilomar Conference on},
  pages 1043--1047, Nov 2013.

\bibitem{tang2013compressed}
Gongguo Tang, Badri~Narayan Bhaskar, Parikshit Shah, and Benjamin Recht.
\newblock Compressed sensing off the grid.
\newblock {\em Information Theory, IEEE Transactions on}, 59(11):7465--7490,
  2013.

\bibitem{tur2011innovation}
Ronen Tur, Yonina~C Eldar, and Zvi Friedman.
\newblock Innovation rate sampling of pulse streams with application to
  ultrasound imaging.
\newblock {\em Signal Processing, IEEE Transactions on}, 59(4):1827--1842,
  2011.

\bibitem{wagner2012compressed}
Noam Wagner, Yonina~C Eldar, and Zvi Friedman.
\newblock Compressed beamforming in ultrasound imaging.
\newblock {\em Signal Processing, IEEE Transactions on}, 60(9):4643--4657,
  2012.

\end{thebibliography}

\end{document}